\newcommand{\R}{\ensuremath{\mathbb R}}  
\newcommand{\N}{\ensuremath{\mathbb N}}  
\renewcommand{\k}{\mathsf{k}}
\newcommand{\ceil}[1]{\lceil #1 \rceil}
\newtheorem{definition}{Definition}
\newtheorem{theorem}{Theorem}
\newtheorem{assumption}{Assumption}
\newcommand{\rx}{r_{\mathcal{X}}}
\title{\LARGE \bf
Stability of data-driven Koopman MPC with terminal conditions*
}
\author{Irene Schimperna$^{1}$, Lea Bold$^{2}$, Johannes Köhler$^{3}$, Karl Worthmann$^{2}$ and Lalo Magni$^{1}$
\thanks{*I.S.\ gratefully acknowledges funding by the Project “Adaptive Personalised Safe Artificial Pancreas for children and adolescents (APS-AP)” - CUP F53D23000720006 - Grant Assignment Decree No. 960 adopted on 30/06/2023 by the Italian Ministry of University and Research (MUR).
L.B.\ and K.W.\ gratefully acknowledge funding by the German Research Foundation (DFG; project numbers~$535860958$ and~$545246093$).}
\thanks{$^{1}$Irene Schimperna and Lalo Magni are with the Department of Electrical, Computer and Biomedical Engineering, University of Pavia, Via Ferrata 5, Pavia, 27100, Italy
        {\tt\small irene.schimperna01@universitadipavia.it, lalo.magni@unipv.it}}%
\thanks{$^{2}$Lea Bold and Karl Worthmann are with the Optimization-based Control Group, Institute of Mathematics, Technische Universit{\"a}t Ilmenau, Ilmenau, Germany
        {\tt\small lea.bold@tu-ilmenau.de, karl.worthmann@tu-ilmenau.de}}%
\thanks{$^{3}$Johannes Köhler is with the Institute for Dynamic Systems and Control, ETH Zurich,
Zurich 8053, Switzerland
        {\tt\small jkoehle@ethz.ch}}%
}
\begin{document}

\maketitle
\thispagestyle{empty}
\pagestyle{empty}

\begin{abstract}
    This paper derives conditions under which Model Predictive Control (MPC) with terminal conditions, using a data-driven surrogate model 
    as a prediction model, asymptotically stabilizes the plant despite approximation errors. 
    In particular, we prove recursive feasibility and asymptotic stability if a 
    proportional error bound holds, where proportional means that the bound is linear in the norm of the state and the input.
    For a broad class of nonlinear systems, this condition can be satisfied using data-driven surrogate models generated by kernel Extended Dynamic Mode Decomposition (kEDMD) using the Koopman operator. Last, the applicability of the proposed framework is demonstrated in a numerical case study.
\end{abstract}


\section{Introduction}

\noindent The use of data-driven surrogate models in Model Predictive Control (MPC) gained increasing popularity in the last years thanks to the development of many powerful learning methods~\cite{hewing2020learning}. 
MPC algorithms based on a variety of different data-driven models have been proposed, including nonlinear ARX models \cite{denicolao1997stabilizing}, Gaussian process regression~\cite{hewing2019cautious, koller2018learning} and neural networks \cite{ren2022tutorial,schimperna2024recurrent}. 
In this setting, a family of powerful methods are based on the Koopman operator and its data-driven approximation using Extended Dynamic Mode Decomposition (EDMD~\cite{williams2015data}), see~\cite{strasser2025overview} for a recent overview on Koopman-based control with closed-loop including a recap of the theoretical foundations and recent progress w.r.t.\ error bounds.
The Koopman operator is a linear but infinite-dimensional operator that encodes the behavior of the associated nonlinear dynamical system \cite{mezic2005spectral, rowley2009spectral, lazar2025product}. 
Different variants of EDMD-based surrogate models have been successfully employed in MPC. A popular choice is linear EDMD with control (EDMDc; \cite{korda2018linear}). 
An alternative are bilinear EDMDc models, that have better representation capabilities \cite{iacob:toth:schoukens:2022} and for which finite-data error bounds can be derived \cite{nuske2023finite,schaller2023towards}.
Finally, kernel EDMD (kEDMD; \cite{klus2020kernel}) is an appealing extension: On the one hand, the observables are chosen in a data-driven manner. On the other hand, recent pointwise error bound~\cite{kohne2025error} has been extended to systems with inputs~\cite{BoldPhil25} such that the approximation error can be rigorously analysed.

Standard stability results in MPC are derived under the assumption that the model exactly describes the system under control~\cite{RawlMayn17}. However, approximation errors are typically present in data-driven models. 
In this case, neither recursive feasibility nor stability are guaranteed. 
To ensure recursive feasibility in presence of modeling errors or disturbances it is necessary to 
use a robust MPC formulation, e.g., based on constraint tightening.
For linear systems, 
the tightening can be computed exactly \cite{kouvaritakis2016model}, while for nonlinear systems over-approximations are often used either based on the Lipschitz constant of the system under control~\cite{limon2002input} or on incremental stabilizability~\cite{kohler2018novel}.
Concerning closed-loop stability, a part of the literature focuses on inherent robustness of MPC, i.e., under which assumptions the MPC control law designed using the nominal model stabilizes also the uncertain system \cite{grimm2007nominally}.
Another popular method to achieve convergence to a desired reference despite model-plant mismatch is offset-free MPC~\cite{pannocchia2015offset, SchiBold25}.
For MPC without terminal conditions, the recent paper~\cite{schimperna2025data} has shown that asymptotic stability of the origin w.r.t.\ the MPC closed loop can be obtained if, in particular, proportional error bounds are available, i.e., when the model exactly describes the system at the origin and the size of the error is proportional to magnitude of the state and the input. 
This condition has been verified for kEDMD surrogate models in~\cite{BoldPhil25, schimperna2025data}.
Using similar conditions to~\cite{BoldPhil25} and~\cite{schimperna2025data}, \cite{kuntz2025beyond} have derived conditions under which a nominal MPC without state constraints but with terminal cost and terminal set ensures asymptotic stability in presence of a parametric model-plant mismatch. Finally, exponential stability of Koopman MPC with terminal ingredients is shown in \cite{shang2025exponential} using an EDMDc model, where, however, it is not studied how to guarantee the required bounds on the modeling error.

In this paper, we study conditions for asymptotic stability of MPC formulations with state constraints and terminal conditions based on data-driven surrogate models. 
In particular, we consider a terminal cost and a terminal set designed on the base of the surrogate model, and we introduce a constraint tightening to ensure recursive feasibility.
We prove that, in presence of sufficiently small
error bounds, the MPC designed with the data-driven surrogate model asymptotically stabilizes the unknown system. 
Moreover, we show how the requirement can be satisfied by kEDMD models in the Koopman framework. 
Finally, we verify our finding with numerical simulations.

On the one hand, our analysis extends the results presented in~\cite{schimperna2025data} to MPC with terminal conditions, allowing to use MPC with short horizons, and tightens 
the analysis of~\cite{WortStra24}, where only practical asymptotic stability was shown. 
On the other hand, we also extend the results proposed in~\cite{kuntz2025beyond} by proposing a learning framework such that the imposed assumptions can be rigorously verified to derive asymptotic stability. 
Since the required error bounds can only be ensured in the compact set where data are collected, we also had to incorporate state constraints in the MPC algorithm, which were not considered in~\cite{kuntz2025beyond}.

The paper is organized as follows. Section~\ref{sec:setup} introduces the system and the properties of the 
data-driven model. Section~\ref{sec:MPC:analysis} presents the proposed MPC algorithm and analyzes its stability properties. Section~\ref{sec:implementation} shows how the MPC stability conditions can be satisfied with kEDMD surrogate models. Simulation examples are reported in Section~\ref{sec:examples}, before conclusions are drawn in Section~\ref{sec:conclusion}.
\\

\noindent\textbf{Notation}: $\|\cdot\|$ is used for the Euclidean norm on $\R^n$ and for the induced matrix norm on $\R^{n \times n}$, while $\|\cdot\|_\mathcal{F}$ denotes the Frobenius norm. For a matrix $M$, we use the shorthand notation $\|\cdot\|_M^2 := x^\top M x$.
The pseudoinverse of a matrix $M$ is denoted by $M^\dagger$.
Given two numbers $a, b \in \mathbb{Z}$, the abbreviation $[a:b] := \mathbb{Z} \cap [a, b]$ is used. 
By $C_b(\Omega)$ the space of continuous bounded real-valued functions on the set $\Omega$ is denoted.
The ball of radius $r$ centered at~$x$ is denoted by $\mathcal{B}_r(x)$, while only using~$\mathcal{B}_r$ for $\mathcal{B}_r(0)$. 
The symbol $\ominus$ denotes the usual Pontryagin set difference. For $i \in \N$, $\delta_{i1}$ denotes the Kronecker-delta, i.e., $\delta_{i1} = 1$ for $i = 1$ and $0$ otherwise.

\section{Setup and MPC with terminal conditions} \label{sec:setup}

\noindent In this paper, we consider the discrete-time control system 
\begin{equation}\label{eq:sys}
    x^+ = f(x,u) 
\end{equation}
with state $x \in \mathbb{S}$, input $u \in \mathbb{U}$ and successor 
state $x^+ \in \R^n$, where $\mathbb{S} \subset \R^n$ and $\mathbb{U} \subset \R^m$ are compact, convex sets containing the origin in their interior.
Let the map $f: \mathbb{S} \times \mathbb{U} \to \R^n$ be continuous and locally Lipschitz continuous w.r.t.\ its first argument.
We assume that the origin is a controlled equilibrium of the system, i.e., $f(0,0) = 0$.

Since the system dynamics~\eqref{eq:sys} are considered to be unknown, the MPC controller uses the data-driven surrogate
\begin{equation} \label{eq:sys:surrogate}
    x^+ = f^\varepsilon(x,u)
\end{equation}
parametrized in $\varepsilon \in (0, \bar{\varepsilon}]$, $\bar{\varepsilon} \in (0, \infty)$, in the optimization step. 
The superscript~$\varepsilon$ in~$f^\varepsilon$ refers to the approximation accuracy, and is used in the paper to refer to all the parameters related to the surrogate dynamics~\eqref{eq:sys:surrogate}.

The goal of the paper is to show that, if the surrogate~\eqref{eq:sys:surrogate} is sufficiently accurate, the MPC algorithm ensures constraint satisfaction and asymptotic stability when applied to the system~\eqref{eq:sys}. 
In particular, we assume the existence of proportional and uniform error bounds for the data-driven surrogate model similarly to~\cite{bold2024data,WortStra24} and, later also~\cite{BoldScha25,kuntz2025beyond,schimperna2025data}.
\begin{assumption} \label{ass:error_bounds}
    Let the following hold for all $\varepsilon \in (0,\bar{\varepsilon}]$:
    \begin{enumerate}
        \item The proportional and uniform error bound 
        \begin{equation}\label{eq:ass:error_bound:proportional}
            \|f(x, u) - f^\varepsilon(x, u)\| \leq \min\{ c_x^\varepsilon\|x\| + c_u^\varepsilon\|u\|, \eta^\varepsilon \}
        \end{equation}
        holds for all $x \in \mathbb{S}, u \in \mathbb{U}$ with parameters $c_x^\varepsilon, c_u^\varepsilon, \eta^\varepsilon$ satisfying $\lim_{\varepsilon \searrow 0} \max \{c_x^\varepsilon, c_u^\varepsilon, \eta^\varepsilon\} = 0$. 
        \item The surrogate dynamics~\eqref{eq:sys:surrogate} is Lipschitz continuous on the set $\mathbb{S} \times \mathbb{U}$, i.e., $\exists\,\bar{L} \geq 0$ such that for every $u \in \mathbb{U}$
        \begin{equation}\label{eq:ass:Lipschitz}
            \|f^\varepsilon(x,u) - f^\varepsilon(y,u)\| \leq 
            \bar{L} \|x-y\| \quad\forall\,x,y \in \mathbb{S}.
        \end{equation}
    \end{enumerate}
\end{assumption}
Section~\ref{sec:implementation} recaps a method to generate data-driven surrogate models satisfying Assumption~\ref{ass:error_bounds} using the Koopman operator.

We consider a standard MPC formulation, where stability and recursive feasibility are ensured through terminal conditions.
The MPC considers quadratic stage cost
\begin{equation*}
    \ell(x, u) = \|x\|_Q^2 + \|u\|_R^2 = x^\top Q x + u^\top R u
\end{equation*}
with symmetric and positive definite weighting matrices $Q \in \R^{n \times n}$ and $R \in \R^{m \times m}$ as well as a terminal cost $V_\mathrm{f}^\varepsilon$ and a terminal region $\mathbb{X}_\mathrm{f}^\varepsilon$ designed using the surrogate model. 

In the following, we introduce the definition of admissibility of a sequence $\mathbf{u} = (u(k))_{k=0}^{N-1} \subset \mathbb{U}$ for an initial state~$\hat{x}$ and a prediction horizon $N \in \mathbb{N}$, where a constraint tightening is incorporated to ensure recursive feasibility despite modeling errors.
To do so, let the state trajectory associated to $\mathbf{u}$ be given by $(x^\varepsilon_{\mathbf{u}}(k; \hat{x}))_{k=0}^{N}$, where $x^\varepsilon_{\mathbf{u}}(0; \hat{x}) = \hat{x}$ and $x^\varepsilon_{\mathbf{u}}(k; \hat{x}) = f^\varepsilon(x^\varepsilon_{\mathbf{u}}(k-1; \hat{x}), u(k-1))$ for $k \in [1:N]$ .
\begin{definition} [$\varepsilon$-admissibility] 
\label{def:admissibility}
    Consider the surrogate model~\eqref{eq:sys:surrogate} with $\varepsilon \in (0,\bar{\varepsilon}]$.
    $\mathbf{u}$ is said to be an $\varepsilon$-admissible control sequence for state $\hat{x} \in S$ and horizon $N \in \mathbb{N}$ if 
    \begin{equation*}
        x^\varepsilon_{\mathbf{u}}(k; \hat{x}) \in \begin{cases}
            \mathbb{S} \ominus \mathcal{B}_{\bar{c}(k) \eta^\varepsilon} & \text{for all $k \in [1:N-1]$} \\
            \mathbb{X}_\mathrm{f}^\varepsilon & \text{for $k = N$}
        \end{cases}
    \end{equation*}
    with $\bar{c}(k) = \sum_{i=0}^{k-1} \bar{L}^i$, $\bar{L}$ defined in Assumption~\ref{ass:error_bounds}.
    The set of $\varepsilon$-admissible control sequences is denoted by $\mathcal{U}_N^\varepsilon(\hat{x})$.   
\end{definition}

The admissibility definition considers tightened constraints based on the Lipschitz constant $\bar{L}$ and the uniform error bound $\eta^\varepsilon$ as originally proposed in~\cite{limon2002input}.
A less conservative constraint tightening approach using incremental stabilizability can be found in~\cite{kohler2018novel} and is of interest for future work. 

Given the admissibility definition, it is possible to state the MPC algorithm, which is reported in Algorithm \ref{alg:MPC}. 
\begin{algorithm}[htb]
    \caption{MPC with terminal conditions using the surrogate model~\eqref{eq:sys:surrogate} in the optimization step}\label{alg:MPC}
    \raggedright
    \hrule
    \smallskip
    {\it Input:} Horizon $N \in \N$, stage cost~$\ell$,
    surrogate~$f^\varepsilon$, state constraint set~$\mathbb{S}$, terminal cost~$V_\mathrm{f}^\varepsilon$, terminal region~$\mathbb{X}_\mathrm{f}^\varepsilon$, control constraint set $\mathbb{U}$.
    \smallskip\hrule
    \medskip
    \textit{Initialization}: Set $k = 0$.\\[2mm]
    \noindent\textit{(1)} Measure current state $x_{\mu_N^\varepsilon}(k)$ and set $\hat{x} = x_{\mu_N^\varepsilon}(k)$.\\[1mm]
    \noindent\textit{(2)} Solve the optimal control problem~
    \begin{align}\label{eq:OCP}\tag{OCP}
    \begin{split}
        \min_{\mathbf{u} \in \mathcal{U}_N^\varepsilon(\hat{x})} \;\; & J_N^\varepsilon(\hat{x}, \mathbf{u}) := \sum_{i = 0}^{N-1} \! \ell(x_\mathbf{u}^\varepsilon(i;\hat{x}), u(i)) + V_\mathrm{f}^\varepsilon(x_\mathbf{u}^\varepsilon(N;\hat{x})) \\
        \text{s.t.} \;\; & x_\mathbf{u}^\varepsilon(0;\hat{x}) = \hat{x} \\
        & x_\mathbf{u}^\varepsilon(i+1;\hat{x}) \! = f^\varepsilon(x_\mathbf{u}^\varepsilon(i;\hat{x}), u(i)), \;\;\; i \in [0:N-1]
    \end{split}
    \end{align}
    \hspace*{5mm} to obtain an
    optimal input sequence~$\mathbf{u}^\star = (u^\star(i))_{i = 0}^{N - 1}$. \\[1mm]
    \noindent\textit{(3)} Apply the MPC feedback law~$\mu_N^\varepsilon(\hat{x}) = u^\star({0})$ at the\\
    \hspace*{5mm} plant \eqref{eq:sys}, increment~$k$ and go to Step~\textit{(1)}.
    \smallskip\hrule
\end{algorithm}

We define the optimal value function as
\begin{equation*}
    V_N^\varepsilon (\hat{x}) := \inf_{\mathbf{u} \in \mathcal{U}_N^\varepsilon (\hat{x})} J_N^\varepsilon (\hat{x}, \mathbf{u}),
\end{equation*}
which will be used in the proof of asymptotic stability. For simplicity of exposition, we assume that a minimizer exists, see, e.g., \cite{grune2017nonlinear} for remedies.
To prove recursive feasibility and asymptotic stability, the terminal cost function~$V_\mathrm{f}^\varepsilon$ and the terminal region~$\mathbb{X}_\mathrm{f}^\varepsilon$ must satisfy the following assumption.
\begin{assumption} [Terminal conditions] \label{ass:terminal}
    For $\varepsilon \in (0,\bar{\varepsilon}]$, the terminal cost~$V_\mathrm{f}^\varepsilon$ can be written as
    \begin{equation} \label{eq:structure-Vf}
        V_\mathrm{f}^\varepsilon(x) = \| \Phi^\varepsilon(x) \|_P^2 = \Phi^\varepsilon(x)^\top P \Phi^\varepsilon(x)
    \end{equation}
    for a positive definite matrix $P$ and a Lipschitz-continuous function~$\Phi^\varepsilon: \mathbb{S} \rightarrow \mathbb{R}^M$ with Lipschitz constant $L_\Phi$ satisfying $\Phi^\varepsilon(0) = 0$, and the terminal set is a sublevel set of the terminal cost, i.e.
    \begin{equation} \label{eq:structure-Xf}
        \mathbb{X}_\mathrm{f}^\varepsilon = \{ x \in \R^n : V_\mathrm{f}^\varepsilon(x) \leq c \}
    \end{equation}
    with $c > 0$, and is such that $\mathbb{X}_\mathrm{f}^\varepsilon \subseteq \mathbb{S} \ominus \mathcal{B}_{\bar{c}(N) \eta^\varepsilon}$.
    Moreover, there exists a control law $\mu^\varepsilon: \mathbb{X}_\mathrm{f}^\varepsilon \to \mathbb{U}$ with $\mu^\varepsilon(0) = 0$ such that the Lyapunov decrease condition
    \begin{equation}\label{eq:terminal-cost-eps}
        V_\mathrm{f}^\varepsilon(f^\varepsilon(x, \mu^\varepsilon(x))) - V_\mathrm{f}^\varepsilon(x) \leq -\ell(x, \mu^\varepsilon(x))
    \end{equation}
    holds for all $x \in \mathbb{X}_\mathrm{f}^\varepsilon$.
\end{assumption}

A design of $V_\mathrm{f}^\varepsilon$ and $\mathbb{X}_\mathrm{f}^\varepsilon$ satisfying Assumption~\ref{ass:terminal} based on a linearization of the surrogate model at the origin is proposed at the end of Section~\ref{sec:MPC:analysis}.

\section{Recursive feasibility \& asymptotic stability}
\label{sec:MPC:analysis}

In the following, we present the main result of the paper, where we show that in presence of sufficiently tight proportional and uniform error bounds, see Property 1 of Assumption~\ref{ass:error_bounds}, the MPC using the surrogate model~\eqref{eq:sys:surrogate} ensures asymptotic stability of the original system~\eqref{eq:sys}.
\begin{theorem} \label{thm:AS_MPC}
    Let Assumptions~\ref{ass:error_bounds} and~\ref{ass:terminal} hold for a sufficiently small $\varepsilon > 0$ and assume initial feasibility, i.e.,  
    $\mathcal{U}^\varepsilon_N(x_{\mu_N^\varepsilon}(0)) \neq \emptyset$.
    Then, the optimization problem of Algorithm~\ref{alg:MPC} is 
    feasible for all $k \in \mathbb{Z}_{\geq 0}$, and the origin is asymptotically stable for the MPC closed loop. 
\end{theorem}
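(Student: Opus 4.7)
The plan is to combine the standard value-function-decrease argument for MPC with terminal conditions (see \cite{RawlMayn17}) with a careful propagation of the model--plant mismatch through the prediction horizon, using the proportional form of the error bound to rule out merely practical stability. Let $\hat{x}$ denote the measured state at a generic closed-loop time $k$, $\mathbf{u}^\star$ an optimizer of \eqref{eq:OCP} with predicted trajectory $x_{\mathbf{u}^\star}^\varepsilon(\cdot;\hat{x})$, write $\bar{x}:=x_{\mathbf{u}^\star}^\varepsilon(N;\hat{x})\in\mathbb{X}_\mathrm{f}^\varepsilon$, and set $x^+:=f(\hat{x},u^\star(0))$. I would take the shifted candidate $\tilde{\mathbf{u}}$ at time $k+1$ with $\tilde{u}(i)=u^\star(i+1)$ for $i\in[0:N-2]$ and $\tilde{u}(N-1)=\mu^\varepsilon(\bar{x})$, deliberately evaluating $\mu^\varepsilon$ at the \emph{previous} prediction $\bar{x}\in\mathbb{X}_\mathrm{f}^\varepsilon$, where it is guaranteed to be defined by Assumption~\ref{ass:terminal}. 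By \eqref{eq:ass:error_bound:proportional} and \eqref{eq:ass:Lipschitz}, the resulting trajectory deviation obeys $\|x_{\tilde{\mathbf{u}}}^\varepsilon(i;x^+)-x_{\mathbf{u}^\star}^\varepsilon(i+1;\hat{x})\|\le\bar{L}^{\,i}\eta^\varepsilon$ for $i\in[0:N-1]$.

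For recursive feasibility, the tightened intermediate constraints from Definition~\ref{def:admissibility} follow immediately from the telescoping identity $\bar{c}(i+1)=\bar{c}(i)+\bar{L}^{\,i}$. The critical step is the terminal constraint: Lipschitz continuity of $\Phi^\varepsilon$ combined with the quadratic structure \eqref{eq:structure-Vf} and the decrease \eqref{eq:terminal-cost-eps} at $\bar{x}$ yields
\[
V_\mathrm{f}^\varepsilon(x_{\tilde{\mathbf{u}}}^\varepsilon(N;x^+))\le V_\mathrm{f}^\varepsilon(\bar{x})-\ell(\bar{x},\mu^\varepsilon(\bar{x}))+\Delta(\eta^\varepsilon)
\]
with $\Delta(\eta^\varepsilon)\to 0$ as $\varepsilon\searrow 0$. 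A case distinction absorbs $\Delta(\eta^\varepsilon)$: if $V_\mathrm{f}^\varepsilon(\bar{x})$ is close to $c$ then $\|\bar{x}\|^2$, and hence $\ell(\bar{x},\mu^\varepsilon(\bar{x}))$, is bounded below; otherwise the margin $c-V_\mathrm{f}^\varepsilon(\bar{x})$ itself dominates $\Delta(\eta^\varepsilon)$. Either way, $x_{\tilde{\mathbf{u}}}^\varepsilon(N;x^+)\in\mathbb{X}_\mathrm{f}^\varepsilon$ for $\varepsilon$ sufficiently small, so $\tilde{\mathbf{u}}\in\mathcal{U}_N^\varepsilon(x^+)$ and recursive feasibility follows by induction.

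For asymptotic stability, I would take $V_N^\varepsilon$ as the Lyapunov function. Using $V_N^\varepsilon(x^+)\le J_N^\varepsilon(x^+,\tilde{\mathbf{u}})$ and cancelling the telescoping pair supplied by \eqref{eq:terminal-cost-eps}, the difference $V_N^\varepsilon(x^+)-V_N^\varepsilon(\hat{x})+\ell(\hat{x},u^\star(0))$ reduces to a sum of $\ell$- and $V_\mathrm{f}^\varepsilon$-differences between the two trajectories, each bounded by the deviation above times a local Lipschitz constant. At this point the \emph{proportional} part of \eqref{eq:ass:error_bound:proportional} is decisive: replacing the uniform $\eta^\varepsilon$ by $c_x^\varepsilon\|\hat{x}\|+c_u^\varepsilon\|u^\star(0)\|$ at the first step, and bounding the predicted state norms through $V_N^\varepsilon(\hat{x})$ and hence through $\ell(\hat{x},u^\star(0))$, renders every error term proportional to $\ell(\hat{x},u^\star(0))$ with a prefactor vanishing as $\varepsilon\searrow 0$. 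For $\varepsilon$ small enough this yields $V_N^\varepsilon(x^+)\le V_N^\varepsilon(\hat{x})-\alpha\,\ell(\hat{x},u^\star(0))$ with some $\alpha\in(0,1)$. Combined with the sandwich $\lambda_{\min}(Q)\|\hat{x}\|^2\le V_N^\varepsilon(\hat{x})\le\lambda_{\max}(P)L_\Phi^2\|\hat{x}\|^2$ on $\mathbb{X}_\mathrm{f}^\varepsilon$ (the upper bound by evaluating $V_N^\varepsilon$ along the terminal-controller trajectory and telescoping \eqref{eq:terminal-cost-eps}), this gives asymptotic stability of the origin via the standard Lyapunov argument \cite{grune2017nonlinear}.

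I expect the main obstacle to be the terminal-constraint step in the feasibility argument: the perturbation $\Delta(\eta^\varepsilon)$ must be absorbed \emph{uniformly} over $\bar{x}\in\mathbb{X}_\mathrm{f}^\varepsilon$, including $\bar{x}$ arbitrarily close to the origin where the Lyapunov decrease $\ell(\bar{x},\mu^\varepsilon(\bar{x}))$ vanishes. The sublevel-set form \eqref{eq:structure-Xf}, the quadratic template \eqref{eq:structure-Vf}, and $\Phi^\varepsilon(0)=0$ are all indispensable to make the margin-versus-decrease dichotomy go through. In the stability step, the uniform bound $\eta^\varepsilon$ alone would only yield the practical asymptotic stability of \cite{WortStra24}; it is the proportional component of Assumption~\ref{ass:error_bounds} that upgrades the conclusion to genuine asymptotic stability.
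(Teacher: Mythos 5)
Your proposal is correct and follows essentially the same route as the paper's proof: the same shifted candidate sequence with the terminal controller evaluated at the old terminal prediction $\bar{x}=x^\varepsilon_{\mathbf{u}^\star}(N)$, the same $\bar{L}^i\eta^\varepsilon$ error propagation and telescoping constraint tightening, the same margin-versus-decrease dichotomy for the terminal constraint (the paper phrases it via an auxiliary sublevel set $V_\mathrm{f}^{-1}[0,c_1]\subseteq\mathbb{X}_\mathrm{f}^\varepsilon\ominus\mathcal{B}_{\bar{L}^N\eta^\varepsilon}$), and the same value-function decrease exploiting the proportional bound at the first prediction step. The only step you elide is extending the quadratic upper bound on $V_N^\varepsilon$ from $\mathbb{X}_\mathrm{f}^\varepsilon$ to all feasible states, which is needed to bound the predicted state norms by $\tilde{c}\|\hat{x}\|$ for every feasible $\hat{x}$ and which the paper handles by citing \cite[Prop.~2.38]{RawlMayn17}.
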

\begin{proof}
    \textbf{Recursive feasibility.}
    Let $\hat{x} \in \mathbb{S}$ be a feasible initial state with $\mathcal{U}_N^\varepsilon(\hat{x}) \neq \emptyset$. 
    Let $\mathbf{u}^\star := (u^\star_i)_{i=0}^{N-1}$ and $(x^\varepsilon_{\mathbf{u}^\star}(i))_{i=0}^N$ be the optimal control sequence computed in the MPC optimization step and the corresponding (predicted) state trajectory. 
    Recursive feasibility can be shown as in~\cite{WortStra24} using the control sequence
    \begin{equation} \label{eq:feas-seq}
        \mathbf{u}^+ := \left( u^\star_1, \ldots, u^\star_{N-1}, \mu^\varepsilon(x^\varepsilon_{\mathbf{u}^\star}(N)) \right),
    \end{equation}
    which is the tail of the optimal control sequence prolonged by an additional element using the terminal controller~$\mu^\varepsilon$.
    The corresponding state trajectory is denoted by $(x^\varepsilon_{\mathbf{u}^+}(i))_{i=0}^{N}$, where $x^\varepsilon_{\mathbf{u}^+}(0) = x^+ := f(\hat{x}, \mu_N^\varepsilon(\hat{x})) = f(\hat{x},u^\star_0)$. 
    Note that, in general, $x^+ \neq f^\varepsilon(\hat{x},u^\star_0) = x^\varepsilon_{\mathbf{u}^\star}(1;\hat{x})$ meaning that the predicted successor state may deviate from its counterpart at the plant. 
    We also prolong the optimal sequence~$\mathbf{u}^\star$ using the last element of $\mathbf{u}^+$, i.e., we define $u^\star_N := \mu^\varepsilon(x^\varepsilon_{\mathbf{u}^\star}(N))$. The corresponding prolonged state sequence includes as last element $x^\varepsilon_{\mathbf{u}^\star}(N+1) := f^\varepsilon(x^\varepsilon_{\mathbf{u}^\star}(N), u^\star_N)$.
    Then, using Inequality~\eqref{eq:ass:error_bound:proportional} of Assumption~\ref{ass:error_bounds}, the difference between $(x^\varepsilon_{\mathbf{u}^\star}(i))_{i=1}^{N+1} = (x^\varepsilon_{\mathbf{u}^\star}(i;\hat{x}))_{i=1}^{N+1}$ and $(x^\varepsilon_{\mathbf{u}^+}(i))_{i=0}^{N} = (x^\varepsilon_{\mathbf{u}^+}(i;x^+))_{i=0}^{N}$ can be bounded in the following way:
    \begin{equation*}
        \| x^\varepsilon_{\mathbf{u}^+}(0) - x^\varepsilon_{\mathbf{u}^\star}(1)\| = \|f(\hat{x}, u^\star_0) - f^\varepsilon(\hat{x}, u^\star_0)\| \leq \eta^\varepsilon,
    \end{equation*}
    \begin{equation} \label{eq:error-x-i}
        \| x^\varepsilon_{\mathbf{u}^+}(i) - x^\varepsilon_{\mathbf{u}^\star}(i+1)\| \leq \bar{L}^i \eta^\varepsilon.
    \end{equation}
    Then, since $\bar{c}(i+1) = \bar{L}^i + \bar{c}(i)$, we have the implication
    \begin{equation*}
        x^\varepsilon_{\mathbf{u}^\star}(i+1) \in \mathbb{S} \ominus \mathcal{B}_{\bar{c}(i+1) \eta^\varepsilon} \implies x^\varepsilon_{\mathbf{u}^+}(i) \in \mathbb{S} \ominus \mathcal{B}_{\bar{c}(i) \eta^\varepsilon}
    \end{equation*}
    showing that states $(x^\varepsilon_{\mathbf{u}^+}(i))_{i=0}^{N-1}$ respect the tightened constraints. 
    
    Lastly, we show feasibility of terminal constraint, i.e., we show that $x^\varepsilon_{\mathbf{u}^+}(N) \in \mathbb{X}_\mathrm{f}^\varepsilon$. Let $c_1 = c_1(\eta^\varepsilon) \in (0,c)$ be the largest number such that $V_\mathrm{f}^{-1}[0, c_1] \subseteq \mathbb{X}_\mathrm{f}^\varepsilon \ominus \mathcal{B}_{\bar{L}^N \eta^\varepsilon}$, where $V_\mathrm{f}^{-1}[0, c_1]$ is the sublevel set $\{ x : V_\mathrm{f}^\varepsilon(x) \leq c_1 \}$. 
    In view of Assumption \ref{ass:terminal}, it holds that
    \begin{equation*}
        \min_{x \in \mathbb{X}_\mathrm{f}^\varepsilon \setminus V_\mathrm{f}^{-1}[0, c_1]} V_\mathrm{f}^\varepsilon(x) - V_\mathrm{f}^\varepsilon(f^\varepsilon(x, \mu^\varepsilon(x))) > 0.
    \end{equation*}
    Hence, for sufficiently small $\varepsilon$, and, thus, a sufficiently small $\eta^\varepsilon$, we have that $V_\mathrm{f}^\varepsilon(x_{\mathbf{u}^\star}(N+1)) \leq c_1$. 
    In fact, if $x^\varepsilon_{\mathbf{u}^\star}(N) \in V_\mathrm{f}^{-1}[0, c_1]$ then also $x^\varepsilon_{\mathbf{u}^\star}(N+1)$ is in the same set because sublevel sets of $V_\mathrm{f}^\varepsilon$ are forward invariant under the control law $\mu^\varepsilon$. In case $x^\varepsilon_{\mathbf{u}^\star}(N) \notin V_\mathrm{f}^{-1}[0, c_1]$, then $V_\mathrm{f}^\varepsilon$ must have a positive decrease, and thus if $c_1$ is sufficiently close to $c$ then $x^\varepsilon_{\mathbf{u}^\star}(N+1)$ is in the set $V_\mathrm{f}^{-1}[0, c_1]$. In view of the choice of $c_1$, we have that $x^\varepsilon_{\mathbf{u}^\star}(N+1) \in \mathbb{X}_\mathrm{f}^\varepsilon \ominus \mathcal{B}_{\bar{L}^N \eta^\varepsilon}$, which implies that $x^\varepsilon_{\mathbf{u}^+}(N) \in \mathbb{X}_\mathrm{f}^\varepsilon$ using the bound in \eqref{eq:error-x-i}.
    \\
    
    \noindent \textbf{Asymptotic stability}. We consider the optimal value function $V_N^\varepsilon$ as a candidate Lyapunov function. 
    Invoking optimality of the computed control sequence~$\mathbf{u}^\star = \mathbf{u}^\star(\cdot;\hat{x})$, it follows that
    \begin{subequations}
    \begin{align}
        & V_N^\varepsilon(x^+) - V_N^\varepsilon(\hat{x}) \leq J_N^\varepsilon(x^+, \mathbf{u}^+) - V_N^\varepsilon(\hat{x}) \nonumber \\
        \leq& -\ell(x^\varepsilon_{\mathbf{u}^\star}(0), u^\star_0) \nonumber \\
        & + \sum_{i=0}^{N-2} \Big[ \ell(x^\varepsilon_{\mathbf{u}^+}(i), u^+_i) - \ell(x^\varepsilon_{\mathbf{u}^\star}(i+1), u^\star_{i+1}) \Big] \label{eq:decrease-1} \\
        & + \ell(x^\varepsilon_{\mathbf{u}^+}(N-1), u^+_{N-1}) + V_\mathrm{f}^\varepsilon(x^\varepsilon_{\mathbf{u}^+}(N)) - V_\mathrm{f}^\varepsilon(x^\varepsilon_{\mathbf{u}^\star}(N)) \label{eq:decrease-2}
    \end{align}
    \end{subequations}
    Next, we consider the terms~\eqref{eq:decrease-1} and~\eqref{eq:decrease-2} separately, beginning with the term~\eqref{eq:decrease-1}. Here, we use the fact that $u^+_i = u^\star_{i+1}$ for all $i \in [0: N-2]$. Hence, we have for the $i$-th summand
    \begin{align} 
        & \ell(x^\varepsilon_{\mathbf{u}^+}(i), u^+_i) - \ell(x^\varepsilon_{\mathbf{u}^\star}(i+1), u^\star_{i+1}) \label{eq:proof:MPC:stability:estimate} \\
        = \;& \|x^\varepsilon_{\mathbf{u}^+}(i) \|_Q^2 - \|x^\varepsilon_{\mathbf{u}^\star}(i+1) \|_Q^2 \nonumber \\
        \leq \;& \|Q\| \underbrace{\| x^\varepsilon_{\mathbf{u}^+}(i) + x^\varepsilon_{\mathbf{u}^\star}(i+1) \|}_{= \| 2x^\varepsilon_{\mathbf{u}^\star}(i+1) + x^\varepsilon_{\mathbf{u}^+}(i) - x^\varepsilon_{\mathbf{u}^\star}(i+1)\| } \|x^\varepsilon_{\mathbf{u}^+}(i) - x^\varepsilon_{\mathbf{u}^\star}(i+1)\| \nonumber \\
        \leq \;& 2 \|Q\| \| x^\varepsilon_{\mathbf{u}^\star}(i+1)\| \|x^\varepsilon_{\mathbf{u}^+}(i) - x^\varepsilon_{\mathbf{u}^\star}(i+1)\| \nonumber \\
        & + \|Q\| \| x^\varepsilon_{\mathbf{u}^+}(i) - x^\varepsilon_{\mathbf{u}^\star}(i+1)\|^2 \nonumber
    \end{align}
    where we have used $\|a\|_Q^2 - \|b\|_Q^2 = (a+b)^\top Q (a-b) \leq \|Q\| \|a+b\| \|a-b\|$ in the first inequality.
    Then, using an analogous reasoning as used in the paragraph on recursive feasibility, we get
    \begin{eqnarray}\nonumber
        \| x^\varepsilon_{\mathbf{u}^+}(i) - x^\varepsilon_{\mathbf{u}^\star}(i+1)\| & \leq & \bar{L}^i \| x^\varepsilon_{\mathbf{u}^+}(0) - x^\varepsilon_{\mathbf{u}^\star}(1) \| \\
        & \leq & \bar{L}^i (c_x^\varepsilon \|\hat{x}\| + c_u^\varepsilon \|u^\star_0\|), \label{eq:thm:stability:proof2}
    \end{eqnarray}
    using $x^\varepsilon_{\mathbf{u}^+}(0) - x^\varepsilon_{\mathbf{u}^\star}(1) = x^+ - f^\varepsilon(\hat{x}, u^\star_0) = f(\hat{x}, u^\star_0) - f^\varepsilon(\hat{x}, u^\star_0)$.
    
    Next, we consider the term~\eqref{eq:decrease-2} and extend it suitably, i.e.,
    \begin{align}
        & \ell(x^\varepsilon_{\mathbf{u}^+}(N-1), u^+_{N-1}) \pm \ell(x^\varepsilon_{\mathbf{u}^\star}(N), \mu^\varepsilon(x_{\mathbf{u}^\star}(N))) \nonumber\\
        & + V_\mathrm{f}^\varepsilon(x^\varepsilon_{\mathbf{u}^+}(N)) \pm 
        V_\mathrm{f}^\varepsilon(x^\varepsilon_{\mathbf{u}^\star}(N+1))
        - V_\mathrm{f}^\varepsilon(x^\varepsilon_{\mathbf{u}^\star}(N)) \nonumber
    \end{align}
    The difference $\ell(x^\varepsilon_{\mathbf{u}^+}(N-1), u^+_{N-1}) - \ell(x^\varepsilon_{\mathbf{u}^\star}(N), \mu^\varepsilon(x^\varepsilon_{\mathbf{u}^\star}(N)))$ can be treated analogously to the summands in~\eqref{eq:decrease-1} and will later be included in the respective summation ($i = N-1$).
    Moreover, we have
    \begin{align*}
        & \ell(x^\varepsilon_{\mathbf{u}^\star}(N), \mu^\varepsilon(x^\varepsilon_{\mathbf{u}^\star}(N))) + 
        V_\mathrm{f}^\varepsilon(x^\varepsilon_{\mathbf{u}^\star}(N+1)) \\
        & - V_\mathrm{f}^\varepsilon(x^\varepsilon_{\mathbf{u}^\star}(N)) \leq 0    
    \end{align*}
    in view of Assumption~\ref{ass:terminal}. Then, leveraging $V_\mathrm{f}^\varepsilon(x) = \| \Phi^\varepsilon(x) \|_P^2$, we can estimate the remaining difference analogously to Inequality~\eqref{eq:proof:MPC:stability:estimate} to obtain
    \begin{align*}
        & V_f(x^\varepsilon_{\mathbf{u}^+}(N)) - V_f(x^\varepsilon_{\mathbf{u}^\star}(N+1)) \\
        \leq & 2 \|P\| \| \Phi^\varepsilon\!(x^\varepsilon_{\mathbf{u}^\star}\!(N+1))\|  \|\Phi^\varepsilon\!(x^\varepsilon_{\mathbf{u}^+}(N)) - \Phi^\varepsilon\!(x^\varepsilon_{\mathbf{u}^\star}\!(N+1))\| \\
        & +  \|P\| \|\Phi^\varepsilon(x^\varepsilon_{\mathbf{u}^+}(N)) - \Phi^\varepsilon(x^\varepsilon_{\mathbf{u}^\star}(N+1))\|^2.
    \end{align*}
    The term $\|\Phi^\varepsilon(x^\varepsilon_{\mathbf{u}^+}(N)) - \Phi^\varepsilon(x^\varepsilon_{\mathbf{u}^\star}(N+1))\|$ is treated similarly to~\eqref{eq:thm:stability:proof2}, which yields the estimate
    \begin{align*}
        & \|\Phi^\varepsilon(x^\varepsilon_{\mathbf{u}^+}(N)) - \Phi^\varepsilon(x^\varepsilon_{\mathbf{u}^\star}(N+1))\| \\
        \leq \;& L_\Phi \|x^\varepsilon_{\mathbf{u}^+}(N) - x^\varepsilon_{\mathbf{u}^\star}(N+1)\| \\
        \leq \;& L_\Phi \bar{L}^N (c_x^\varepsilon \|\hat{x}\| + c_u^\varepsilon \|u_0^\star\|).
    \end{align*}
    
    Overall, using the derived bounds on the terms~\eqref{eq:decrease-1} and~\eqref{eq:decrease-2}, we get
    \begin{align} \label{eq:decrease-V-PAS}
        & V_N^\varepsilon(x^+) - V_N^\varepsilon(\hat{x}) \leq -\ell(x^\varepsilon_{\mathbf{u}^\star}(0), u^\star_0) \\
        &\;\;\; + 2 (c_x^\varepsilon \|\hat{x}\| + c_u^\varepsilon \|u^\star_0\|) \Bigg[ \|Q\| \sum_{i=0}^{N-1} \bar{L}^i \| x^\varepsilon_{\mathbf{u}^\star}(i+1)\| \nonumber \\
        &\;\;\; + \|P\| \| \Phi(x^\varepsilon_{\mathbf{u}^\star}(N+1))\| L_\Phi \bar{L}^N \Bigg] \nonumber \\ 
        &\;\;\; + (c_x^\varepsilon \|\hat{x}\| + c_u^\varepsilon \|u^\star_0\|)^2 \left[ \|Q\| \sum_{i=0}^{N-1} \bar{L}^{2i} + \|P\| L_\Phi^2 \bar{L}^{2N} \right] \nonumber
    \end{align}
    To prove asymptotic stability, we have to show a Lyapunov decrease in~\eqref{eq:decrease-V-PAS}. To this end, we mainly have to show that
    both terms are proportional to~$\| \hat{x} \|^2 + \|u^\star_0\|^2$. 

    First, consider the case $\hat{x} \in \mathbb{X}_\mathrm{f}^\varepsilon$.
    In view of Assumption~\ref{ass:terminal}, it is possible to apply the control law $\mu^\varepsilon$ iteratively for $N$ steps to $\hat{x}$.
    Then, due to optimality of the solution in the MPC optimization step solution, we have
    \begin{align*} 
        V_N^\varepsilon(\hat{x}) 
        & \leq \! \sum_{i=0}^{N-1} \! \left( \| x^\varepsilon_{\mu^\varepsilon}(i)\|_Q^2 + \| \mu^\varepsilon(x^\varepsilon_{\mu^\varepsilon}(i)) \|_R^2 \right) + V_\mathrm{f}^\varepsilon(x^\varepsilon_{\mu^\varepsilon}(N)) \nonumber \\
        & \stackrel{\eqref{eq:terminal-cost-eps}}{\leq} V_\mathrm{f}^\varepsilon(\hat{x}) \stackrel{\eqref{eq:structure-Vf}}{\leq} \| \Phi^\varepsilon(\hat{x}) \|_P^2 \leq \lambda_{\max}(P) L_\Phi^2 \| \hat{x} \|^2,
    \end{align*}
    where $\lambda_{\max}(P) > 0$ is the positive definite maximum eigenvalue of matrix $P$, and $x^\varepsilon_{\mu^\varepsilon}(i)$ is the $i$-th element of the state sequence obtained applying the control law $\mu^\varepsilon$ starting from $\hat{x}$.
    The first inequality in the second line is obtained by applying~\eqref{eq:terminal-cost-eps} iteratively in a telescopic sum.
    Since we consider compact constraint sets, a finite optimization horizon $N$ and a continuous cost function, and since the terminal set $\mathbb{X}^\varepsilon_\mathrm{f}$ contains the origin in its interior, it holds that
    \begin{align} \label{eq:upper-bound-with-Vf}
        V_N^\varepsilon(\hat{x}) & = \sum_{i=0}^{N-1} \left( \| x^\varepsilon_{\mathbf{u}^\star}(i)\|_Q^2 + \|u^\star_i\|_R^2 \right) + V_\mathrm{f}^\varepsilon(x^\varepsilon_{\mathbf{u}^\star}(N)) \nonumber \\
        & \leq c_2 \| \hat{x} \|^2
    \end{align}
    for all feasible initial states $\hat{x} \in \mathbb{S}$, with $c_2 \geq \lambda_{\max}(P) L_\Phi^2$, see also \cite[Prop. 2.38]{RawlMayn17}.
    
    Hence, for each $i \in [0:N-1]$, we have
    \begin{align*}
        \| x^\varepsilon_{\mathbf{u}^\star}(i)\|_Q^2 \leq V_N^\varepsilon(\hat{x}) \leq c_2 \| \hat{x} \|^2.
    \end{align*}
    Using the fact that $\lambda_{\min}(Q) \| x^\varepsilon_{\mathbf{u}^\star}(i)\|^2 \leq \| x^\varepsilon_{\mathbf{u}^\star}(i)\|_Q^2$ and by taking the square root of previous expression, we obtain that for all $i \in [0:N-1]$
    \begin{align} \label{eq:proof:bound-x1}
        \| x^\varepsilon_{\mathbf{u}^\star}(i)\| \leq \Tilde{c}  \|\hat{x}\|,
    \end{align}
    with $\Tilde{c} := \sqrt{\frac{c_2}{\lambda_{\min}(Q)}}$.
    
    Similarly, considering the terms $\| x^\varepsilon_{\mathbf{u}^\star}(N)\|$ and $\| \Phi(f^\varepsilon(x^\varepsilon_{\mathbf{u}^\star}(N), \mu^\varepsilon(x^\varepsilon_{\mathbf{u}^\star}(N))))\|$, and
    applying~\eqref{eq:terminal-cost-eps} and~\eqref{eq:upper-bound-with-Vf}, we derive that
    \begin{align*}
        &\| \Phi^\varepsilon(x^\varepsilon_{\mathbf{u}^\star}(N+1)) \|_P^2 + \| x^\varepsilon_{\mathbf{u}^\star}(N)\|_Q^2 \\
        \leq & V_\mathrm{f}^\varepsilon(x^\varepsilon_{\mathbf{u}^\star}(N+1)) + \ell(x^\varepsilon_{\mathbf{u}^\star}(N), \mu^\varepsilon(x^\varepsilon_{\mathbf{u}^\star}(N))) \\
        \stackrel{\eqref{eq:terminal-cost-eps}}{\leq}& V_\mathrm{f}^\varepsilon(x^\varepsilon_{\mathbf{u}^\star}(N)) \stackrel{\eqref{eq:upper-bound-with-Vf}}{\leq} c_2 \| \hat{x} \|^2.
    \end{align*}
    Considering one term at a time on the left hand side and taking the square root, we obtain that the bound \eqref{eq:proof:bound-x1} holds also for $i=N$, and that
    \begin{equation} \label{eq:proof:bound-x3}
        \| \Phi^\varepsilon (x^\varepsilon_{\mathbf{u}^\star}(N+1)) \| \leq
        \Tilde{c} \|\hat{x}\|.
    \end{equation}
    
    Substituting bounds \eqref{eq:proof:bound-x1} 
    and \eqref{eq:proof:bound-x3} in \eqref{eq:decrease-V-PAS} and using the fact that $\|\hat{x}\| \|u^\star_0\| \leq \frac{1}{2}(\|\hat{x}\|^2 + \|u^\star_0\|^2)$ and that $(\|\hat{x}\| + \|u^\star_0\|)^2 \leq 2\|\hat{x}\|^2 + 2\|u^\star_0\|^2$, we get
    \begin{equation*}
        V_N^\varepsilon(x^+) - V_N^\varepsilon(\hat{x}) \leq -\ell(x^\varepsilon_{\mathbf{u}^\star}(0), u^\star_0) + C_x \|\hat{x}\|^2 + C_u \|u^\star_0\|^2
    \end{equation*}
    where
    \begin{align*}
        C_x :=& \sum_{i=0}^{N-1} \Big[ \|Q\| \Tilde{c} \bar{L}^i (2 c_x^\varepsilon + c_u^\varepsilon) + 2 \|Q\| \bar{L}^{2i} (c_x^\varepsilon)^2 \Big] \\
        &+ \|P\| \Tilde{c} L_\Phi \bar{L}^N (2 c_x^\varepsilon + c_u^\varepsilon) + 2 \|P\| L_\Phi^2 \bar{L}^{2N} (c_x^\varepsilon)^2,
    \end{align*}
    \begin{align*}
        C_u :=& \sum_{i=0}^{N-1} \Big[ \|Q\| \Tilde{c} \bar{L}^i c_u^\varepsilon + 2 \|Q\| \bar{L}^{2i} (c_u^\varepsilon)^2 \Big] \\
        &+ \|P\| \Tilde{c}L_\Phi \bar{L}^N c_u^\varepsilon + 2 \|P\| L_\Phi^2 \bar{L}^{2N} (c_u^\varepsilon)^2.
    \end{align*}
    Then, there exists a sufficiently small $\varepsilon > 0$ for which 
    \begin{equation*}
        -\ell(x^\varepsilon_{\mathbf{u}^\star}(0), u^\star_0) + C_x \|\hat{x}\|^2 + C_u \|u^\star_0\|^2 \leq - \alpha \|\hat{x}\|^2
    \end{equation*}
    for some $\alpha > 0$. 
    Combining the decrease of $V_N^\varepsilon$ with quadratic lower and upper bounds (see \eqref{eq:upper-bound-with-Vf}), we prove exponential stability of the closed loop.
\end{proof}

Theorem~\ref{thm:AS_MPC} shows that recursive feasibility and asymptotic stability can be achieved in presence of sufficiently tight error bounds. Uniform error bounds are required to define the constraint tightening and for the proof of recursive feasibility. Instead, the proportionality of the error bounds is the key requirement to obtain asymptotic stability. 
In fact, in presence of proportional error bounds, the prediction error decreases when the state approaches the origin, guaranteeing a decrease of the optimal value function for all the feasible states. 
The exact constant $\varepsilon > 0$ for which asymptotic stability is proven in Theorem~\ref{thm:AS_MPC} is conservative, and the result should be interpreted in a more qualitative way, i.e., asymptotic stability can be achieved for sufficiently tight error bounds.
The techniques used in the proof of Theorem~\ref{thm:AS_MPC} resemble the ones used in \cite{schimperna2025data}, but are adapted to consider the terminal conditions. In this case, asymptotic stability can be achieved for any value of the prediction horizon $N$.

A standard method for the design of terminal conditions satisfying Assumption~\ref{ass:terminal} is based on a linearization of the surrogate at the origin, see \cite[Section 2.5.5]{RawlMayn17}, given by 
\begin{equation*}
    A^\varepsilon := \frac{\partial f^\varepsilon}{\partial x}(0,0), \quad B^\varepsilon := \frac{\partial f^\varepsilon}{\partial u}(0,0).
\end{equation*}
If the pair $(A^\varepsilon, B^\varepsilon)$ is stabilizable,
it is possible to use the control law $\mu^\varepsilon(x) = K x$ for the design of $V_\mathrm{f}^\varepsilon$ and $\mathbb{X}_\mathrm{f}^\varepsilon$, where $K \in \R^{m \times n}$ is such that the matrix $A^\varepsilon + B^\varepsilon K$ is Schur. 
Then, a choice of the terminal cost satisfying Assumption \ref{ass:terminal} is the quadratic function
\begin{equation*}
    V_\mathrm{f}^\varepsilon (x) = \| \Phi^\varepsilon(x)\|_P^2,
\end{equation*}
where $\Phi^\varepsilon$ is the identity function, i.e. $\Phi^\varepsilon(x) = x$, and the matrix $P$ is such that
\begin{equation*}
    (A^\varepsilon + B^\varepsilon K)^\top P (A^\varepsilon + B^\varepsilon K) - P \preceq - \beta (Q + K^\top R K)
\end{equation*}
where $\beta > 1$ accounts for the local linearization error. 
Finally, $\mathbb{X}_\mathrm{f}^\varepsilon$ is chosen as a sublevel set of $V_\mathrm{f}^\varepsilon$ as in \eqref{eq:structure-Xf}, for a sufficiently small constant~$c > 0$.

Alternative methods for the design of terminal conditions satisfying Assumption~\ref{ass:terminal} for Koopman-based surrogate models is the use of quadratic functions in the lifted space~\cite{WortStra24} or the use of approximated Lyapunov functions computed from Koopman principle eigenfunctions \cite{deka2024extensions,vaidya2025koopman}.

\section{Kernel Extended Dynamic Mode Decomposition}\label{sec:implementation}

In this section, we recap 
kernel extended dynamic mode decomposition (kEDMD; \cite{williams2015data, klus2020kernel}) for autonomous nonlinear systems before presenting an extension for control-affine systems.
Finally, we modify the method to enforce the equilibrium condition and show that the resulting surrogate model respects Assumption \ref{ass:error_bounds}.
Let $\k :\R^n \times \R^n \rightarrow \R$ be a symmetric strictly positive definite kernel function, i.e., for all sets~$\mathcal{Y} = \{y_1, \dots, y_p\} \subset \R^n$ the corresponding kernel matrix~$K_\mathcal{Y} = (\k(y_i, y_j))_{i, j = 1}^p$ is positive definite. For $x \in \R^n$, we call the functions~$\Phi_x:\R^n \rightarrow \R$ with 
\begin{align*}
    \Phi_x(y) \coloneqq \k(x, y) 
\end{align*}
\textit{canonical features} of $\k$ and the completion of $\operatorname{span}\{\Phi_x : x \in \R^n\}$ 
yields a Hilbert space~$\mathbb{H}$ with inner product~$\langle \cdot, \cdot\rangle_\mathbb{H}$ induced by $\k$. Moreover, since for elements $\varphi \in \mathbb{H}$ the so-called \textit{reproducing property} is fulfilled, i.e., 
\begin{align*}
    \varphi(x) = \langle \varphi, \Phi_x\rangle_\mathbb{H}, \quad \forall \ x \in \R^n.
\end{align*}
$\mathbb{H}$ is referred to as \textit{reproducing kernel Hilbert space} (RKHS).\\

\noindent\textbf{Koopman operator and kernel EDMD}:
Let $\Omega \subset \R^n$ be open and bounded with Lipschitz boundary containing the origin in its interior. Moreover, consider a discrete-time nonlinear dynamical system described by  
\begin{align}\label{eq:sys:autonomous}
    x^+ = f(x)
\end{align}
with map~$f:\Omega \rightarrow \Omega$, where we additionally assume the domain $\Omega$ with $\mathbb{S} \subset \Omega$ to be forward invariant w.r.t.\ the dynamics~$f$, i.e., $f(x) \in \Omega$ for all $x \in \Omega$ to avoid technical difficulties, see, e.g., \cite{kohne2025error} and the references therein otherwise. Then, the associated linear Koopman operator~$\mathcal{K}:C_b(\Omega) \rightarrow C_b(\Omega)$ is defined for all observable functions~$\psi\in C_b(\Omega)$ by  
\begin{align*}
    (\mathcal{K}\psi)(\hat{x}) = \psi(f(\hat{x})) \quad \forall \ \hat{x} \in \Omega.
\end{align*}
In the following, we also assume that the kernel function~$\k$ is chosen such that the corresponding RKHS~$\mathbb{H}$ is invariant w.r.t. the Koopman operator, i.e., $\mathcal{K}\mathbb{H} \subseteq \mathbb{H}$. Kernel functions fulfilling said property are given by e.g. Wendland (see \eqref{eq:Wendland}) or Matérn kernels as shown in~\cite{kohne2025error}, that induce Hilbert spaces which coincide with fractional Sobolev spaces with equivalent norms \cite{FassYe11}. 
    In this paper, we work with kernel functions based on the Wendland radial basis functions (RBF)~$\Phi_{n, k}:\R^n \rightarrow \R$ with smoothness degree~$k \in \N$ from \cite{wendland2005approximate}. These functions induce a piecewise-polynomial and compactly-supported kernel function with 
    \begin{align}\label{eq:Wendland}
        \k(x, x') \coloneqq \Phi_{n, k}(x - y) = \phi_{n, k}(\|x - y\|) 
    \end{align}
    for $x, y, \in \R^n$ and $\phi_{n, k}$ defined as in \cite[Table 9.1]{Wend04}.

Let $\mathcal{X} = \{x_1, \dots, x_d\}\subset \Omega$ be a set of pairwise distinct data points, then set $V_\mathcal{X} = \operatorname{span}\{\Phi_{x_1}, \dots, \Phi_{x_d}\}$. 
As shown in \cite[Proposition 3.2]{kohne2025error}, a kEDMD approximant~$\widehat{K}$ of the Koopman operator~$\mathcal{K}$ is given by the orthogonal projection~$P_{V_\mathcal{X}}$ of $\mathbb{H}$ onto $V_\mathcal{X}$, with matrix approximant 
\begin{align*}
    \widehat{K} = K_\mathcal{X}^{-1}K_{f(\mathcal{X})}K_\mathcal{X}^{-1} \in \R^{d \times d},
\end{align*}
where $K_{f(\mathcal{X})} = (\k(f(x_i), x_j))_{i, j = 1}^d \in \R^{d \times d}$. 
The data-driven surrogate dynamics of \eqref{eq:sys:autonomous} can then be formulated for an observable~$\psi \in \mathbb{H}$ by
\begin{align*}
    \psi(f(x)) \approx \psi^+(x) \coloneqq \psi_\mathcal{X}^\top \widehat{K}^\top \mathbf{k}_\mathcal{X}(x),  
\end{align*}
where $\psi_\mathcal{X} = (\psi(x_1), \dots, \psi(x_d))^\top$ and $\mathbf{k}_\mathcal{X} = (\Phi_{x_1}, \dots, \Phi_{x_d})^\top$. 

In~\cite[Theorem~5.2]{kohne2025error} and~\cite[Theorem~3.7]{BoldPhil25}, a bound on the full approximation error was derived. 
Therein, the approximation accuracy of kEDMD depends on the the \textit{fill distance} of $\mathcal{X}$ in $\Omega$, defined by
\begin{align*}
    h_\mathcal{X} := \sup_{x \in \Omega} \min_{x_i \in \mathcal{X}} \|x - x_i \|.
\end{align*}

\textbf{Extension to control-affine systems}
Now, we consider control-affine systems of the form 
\begin{align}\label{eq:sys:control_affine}
    x^+ = f(x, u) = g_0(x) + G(x)u
\end{align}
with control input~$u \in \mathbb{U} \subset\R^m$, where $\mathbb{U}$ is compact with the origin in its interior, and with locally Lipschitz-continuous maps~$g_0:\R^n \rightarrow \R^n$ and $G:\R^n \rightarrow \R^{n \times m}$ defined by $G(x) \coloneqq [g_1(x) \mid \cdots \mid g_m(x) ]$.
An extension of data-driven kernel EDMD to systems in the form \eqref{eq:sys:control_affine} was provided in \cite[Section 4]{BoldPhil25}, which produces a method for using the autonomous kernel EDMD to approximate the functions~$g_0, g_1, \dots, g_m$. 
To do so, it is necessary to know the value of the functions $g_0, g_1, \dots, g_m$ in some observation points $x_i$, $i \in [1:d]$, which might not be 
accessible. 
Hence, we rely on an approximation of such values, that can be obtained relying on data points collected in the neighborhood of each observation point, as specified in the following assumption.
\begin{assumption}[Data requirements]\label{a:data}
    Let pairwise distinct cluster points $\mathcal{X} = \{x_1, \dots, x_d\} \subset \R^n$ with $x_1 = 0$ and a cluster radius $\rx \geq 0$ be given such that, for each $i \in [1:d]$, we have data $\mathcal{D}_i = \{(x_{ij}, u_{ij}, x_{ij}^+) : j \in [1:d_i]\}$, $d_i \geq m + (1 - \delta_{i1})$, satisfying ${x}_{ij} \in \mathcal{B}_{r_\mathcal{X}}(x_i) \cap \Omega$, $x_{ij}^+ = f(x_{ij}, u_{ij})$, and the rank condition $\operatorname{rank}(V_i) = m + 1$ for the matrix
    \begin{align*}   
        V_i = \begin{bmatrix}
            1 & \dots & 1 \\
            u_{i1} & \dots & u_{id_i}
        \end{bmatrix} \in \mathbb{R}^{(m+1) \times d_i}.
    \end{align*}
\end{assumption}

Let Assumption~\ref{a:data} hold. Then, we can compute approximations~$\tilde{g}_k(x_i)$ of the points~$g_k(x_i)$, $k \in [0:m]$. Since samples at $x_i$ are not required, we call $x_i$ a \emph{virtual observation point} in the following. 
Then, for each $x_i \in \mathcal{X}$, the data triplets $(x_{ij}, u_{ij}, x^+_{ij})$, 
$j \in [1:d_i]$, are used to compute an approximation $H_i := [\tilde{g}_0(x_i) \mid \tilde{G}(x_i)] \in \mathbb{R}^{n \times (m+1)}$ of $[g_0(x_i) \mid G(x_i)]$  by solving the linear regression problem 
\begin{align}\label{eq:regression:step1}
    \operatorname{arg} \min_{H_i} \big\| [x^+_{i1} \mid \ldots \mid x^+_{id_i}] - H_i V_i \big\|^2_\mathcal{F}.
\end{align}
Given an arbitrary observable function $\psi\in \mathbb{H}$, the propagation of the observable can be conducted as follows
\begin{align*}
    \psi(f(x, u)) \approx \psi^{+,\varepsilon}(x, u) \coloneqq \psi_\mathcal{X}^\top \Big(\widehat{K}_0 + \sum_{k = 1}^m \widehat{K}_k u_k\Big)^\top \mathbf{k}_\mathcal{X}(x),
\end{align*}
with $\widehat{K}_k = K_\mathcal{X}^{-1}K_{\tilde{g}_k(\mathcal{X})}K_\mathcal{X}^{-1}$, $k \in [0:m]$, where $K_{\tilde{g}_k(\mathcal{X})} = (\k(\tilde{g}_k(x_i), x_j))_{i, j = 1}^{d_i}$. We then obtain the full state-space surrogate model 
\begin{align*}
    x^+ = f^\varepsilon(x, u) = \begin{pmatrix}
        \psi_1^{+, \varepsilon}(x, u) \\ \vdots \\ \psi_{n}^{+, \varepsilon}(x, u)
    \end{pmatrix}
\end{align*}
by using the observable functions~$\psi_\ell(x) = e_\ell^\top x$, the $\ell-$th coordinate function for $\ell \in [1:n]$. 

An error bound for this surrogate model was derived in \cite[Theorem 3]{schimperna2025data}, which can be made arbitrarily small by suitably decreasing the fill distance $h_\mathcal{X}$ and the cluster radius $r_\mathcal{X}$ of the dataset. 
To obtain also a proportional error bound, a modification of the algorithm, as proposed in~\cite[Section 4]{schimperna2025data}, is implemented by modifying the regression problem~\eqref{eq:regression:step1} at $x_1 = 0$ by enforcing that $\tilde{g}_0(x_1) = 0$. The following theorem, see \cite[Proposition 2]{schimperna2025data}, gives said proportional bound on the full approximation error and therefore shows that the surrogate model $f^\varepsilon$ fulfills Property 1 of Assumption~\ref{ass:error_bounds}.
\begin{theorem}[Approximation error; Proposition 2 in \cite{schimperna2025data}]\label{thm:control_main}
    Let Assumption~\ref{a:data} on the data hold and $\mathbb{H}$ be the RKHS induced by the Wendland kernel~$\k$ with smoothness degree~$k \ge 1$. 
    In addition, let $f \in C_b^{\ceil{\sigma_{n,k}}}(\Omega,\R^{n})$ with $\sigma_{n,k} := \frac{n+1}{2} + k$ be given. Then, there exist constants $C_1, C_2, C_3, C_4, C_5, h_0, r_0>0$ such that, if the fill distance~$h_\mathcal{X}$ and the cluster radius~$r_\mathcal{X}$ satisfy $h_\mathcal{X} \leq h_0$ and $r_\mathcal{X} \leq r_0$, we have the proportional error bound
    \begin{equation}\label{eq:prop:bound}
        \|f(x,u) - f^\varepsilon(x,u)\| \leq c_x^\varepsilon \|x\| + c_u^\varepsilon \|u\|
    \end{equation}
    for all $(x, u) \in \Omega \times \mathbb{U}$ with
    \begin{align*}
        c_u^\varepsilon & := 
        C_1 \hspace*{3.1cm} h_\mathcal{X}^{k+1/2} + C_2\|K_\mathcal{X}^{-1}\|\hspace*{0.28cm}\rx \\
        c_x^\varepsilon &:= \Big[ C_3 + m C_4\max_{u \in \mathbb{U}} \|u\|_\infty \Big] h_\mathcal{X}^{k-1/2} + \tilde{C}(\mathcal{D})\hat{C}(\mathcal{X})\, r_\mathcal{X}
    \end{align*}
    with $\mathcal{D} = \bigcup_{i \in [1:d]} \mathcal{D}_i$ and functions
    \begin{align*}
        \tilde{C}(\mathcal{D}) &\coloneqq C_5  
        (\max_{u \in \mathbb{U}}\|u\|_1 + 1)
        \max_{k \in [1:d]}\sqrt{2d_k}\|V_k^\dagger\| \\
        \hat{C}(\mathcal{X}) &\coloneqq n \Phi_{n, k}(0) \cdot \|K_\mathcal{X}^{-1}\| \left(
        \max_{v \in \mathbb{R}^d: |v_i| = 1 }v^\top K_\mathcal{X}^{-1}v \right).
    \end{align*}
\end{theorem}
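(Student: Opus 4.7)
\textbf{Plan of proof for Theorem~\ref{thm:control_main}.}

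The plan is to decompose the total approximation error $\|f(x,u) - f^\varepsilon(x,u)\|$ into a kEDMD part (how well the projection $P_{V_\mathcal{X}}$ captures the Koopman action on coordinate functions) and a regression part (how well $\tilde g_k(x_i)$ approximates $g_k(x_i)$ at the virtual observation points). Because $f$ is control-affine, I would first write $f(x,u) = g_0(x) + \sum_{k=1}^{m} g_k(x) u_k$ and its surrogate in the analogous form involving $\tilde g_k^\varepsilon(x)$ obtained by kEDMD applied to each $\tilde g_k$. Then I would apply the resulting bound componentwise to the observables $\psi_\ell(x) = e_\ell^\top x$ for $\ell \in [1:n]$ and collect the contributions, which gives the factor $n \Phi_{n,k}(0)$ appearing in $\hat C(\mathcal{X})$.

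For the kEDMD part, I would invoke the pointwise bound from \cite[Theorem 5.2]{kohne2025error} (or its control-affine refinement \cite[Theorem 3.7]{BoldPhil25}) applied separately to each $g_k$. For the coordinate observables, this yields scaling $h_\mathcal{X}^{k+1/2}$ for the term multiplied by $u_k$ (giving the $C_1$ contribution to $c_u^\varepsilon$) and the weaker scaling $h_\mathcal{X}^{k-1/2}$ for the term where the state $x$ of $\mathbf{k}_\mathcal{X}(x)$ enters the bound (producing $C_3$ and, via the loop over the $m$ input channels with $\max_{u\in\mathbb{U}}\|u\|_\infty$, $C_4$). The constants $\|K_\mathcal{X}^{-1}\|$ appear because the kEDMD approximant contains $K_\mathcal{X}^{-1}$ factors, which propagate through the pointwise estimate.

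For the regression part, I would bound $\|\tilde g_k(x_i) - g_k(x_i)\|$ by analysing the least-squares problem \eqref{eq:regression:step1}. Writing $x^+_{ij} = g_0(x_{ij}) + G(x_{ij})u_{ij}$ and comparing to $H_i V_i$ with target $[g_0(x_i)\mid G(x_i)]$, the residual at each $j$ is of the form $\|g_0(x_{ij})-g_0(x_i)\| + \|u_{ij}\|\|G(x_{ij})-G(x_i)\|$, which by local Lipschitz continuity of $g_0, G$ is $O(r_\mathcal{X})$; the sensitivity $\|V_i^\dagger\|$ and a $\sqrt{2 d_k}$ factor from passing between Frobenius and induced norms yield $\tilde C(\mathcal{D})$, and combining with the kEDMD propagation through $K_\mathcal{X}^{-1}$ gives the $\hat C(\mathcal{X})\,r_\mathcal{X}$ summand in $c_x^\varepsilon$ and the $C_2\|K_\mathcal{X}^{-1}\|\,r_\mathcal{X}$ summand in $c_u^\varepsilon$.

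The main obstacle, and the step where the modification enforcing $\tilde g_0(x_1) = 0$ at $x_1 = 0$ is crucial, is converting the bound from uniform to \emph{proportional}. For the contributions premultiplied by $u_k$, proportionality in $\|u\|$ is automatic. For the $g_0$ part, however, one needs a factor of $\|x\|$, which is unavailable without the modification. With $\tilde g_0(0)=0=g_0(0)$, the full surrogate satisfies $f^\varepsilon(0,0) = 0$, so the residual $g_0(x) - \tilde g_0^\varepsilon(x)$ vanishes at $x=0$ and a mean-value / Lipschitz estimate on the difference extracts the factor $\|x\|$, turning the $h_\mathcal{X}^{k-1/2}$ and $r_\mathcal{X}$ bounds into the coefficients of $\|x\|$ displayed in $c_x^\varepsilon$. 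Once this is done for both the kEDMD and regression components, the two bounds are added to produce \eqref{eq:prop:bound} with the stated constants, and smallness of $h_\mathcal{X}\leq h_0$ and $r_\mathcal{X}\leq r_0$ is only needed to invoke the underlying pointwise kEDMD estimate and guarantee invertibility of $K_\mathcal{X}$ uniformly.
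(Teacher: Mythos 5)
First, note that the paper does not actually prove Theorem~\ref{thm:control_main}; it is imported verbatim from \cite[Proposition~2]{schimperna2025data}, so your sketch can only be judged against the strategy of that cited line of work (\cite{kohne2025error,BoldPhil25,schimperna2025data}). Your overall architecture is consistent with it: split the error into the kEDMD projection part and the cluster-regression part, apply the bounds componentwise to the coordinate observables $\psi_\ell(x)=e_\ell^\top x$ (whence $n\,\Phi_{n,k}(0)\,\|K_\mathcal{X}^{-1}\|$), analyze the least-squares problem~\eqref{eq:regression:step1} to get an $O(r_\mathcal{X})$ perturbation with sensitivity $\sqrt{2d_k}\|V_k^\dagger\|$, and use the enforced constraint $\tilde g_0(x_1)=0$ together with $x_1=0$ being an interpolation node (so that indeed $f^\varepsilon(0,0)=0$) to aim for proportionality.

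The genuine gap is in the decisive step, the upgrade from a uniform to a proportional bound. You argue: the residual vanishes at $x=0$, hence ``a mean-value / Lipschitz estimate on the difference extracts the factor $\|x\|$, turning the $h_\mathcal{X}^{k-1/2}$ and $r_\mathcal{X}$ bounds into the coefficients of $\|x\|$.'' But the only quantitative ingredients you invoke are the \emph{pointwise (function-value)} error bounds of \cite[Thm.~5.2]{kohne2025error}/\cite[Thm.~3.7]{BoldPhil25}, which scale as $h_\mathcal{X}^{k+1/2}$; knowing $e(0)=0$ together with $\|e\|_\infty\lesssim h_\mathcal{X}^{k+1/2}$ does not yield $\|e(x)\|\lesssim h_\mathcal{X}^{k-1/2}\|x\|$ via the mean value theorem. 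What is needed — and what the cited analysis supplies — is a bound on the \emph{derivative} (spatial Lipschitz constant) of the error map itself: a $C^1$-type sampling/interpolation error estimate, which is precisely where one power of $h_\mathcal{X}$ is lost (giving $h_\mathcal{X}^{k-1/2}$, with the $m\,C_4\max_{u}\|u\|_\infty$ term because the $u$-weighted channels also contribute to the Lipschitz constant in $x$), and likewise a bound on the Lipschitz constant of the kernel-interpolated regression error, which is the origin of $\hat C(\mathcal{X})$ with its $\Phi_{n,k}(0)$, $\|K_\mathcal{X}^{-1}\|$ and quadratic-form factors. Your attribution of $h_\mathcal{X}^{k-1/2}$ to ``the term where the state $x$ of $\mathbf{k}_\mathcal{X}(x)$ enters the bound'' leaves this essential ingredient unstated and unjustified; without derivative error estimates for the kernel surrogate (and for the propagated cluster errors), the proportional bound~\eqref{eq:prop:bound} does not follow from the steps you list. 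The remainder of the sketch (uniform treatment of the $G$-part giving $c_u^\varepsilon$, the regression perturbation analysis, the role of $h_0,r_0$) is plausible and matches the cited approach up to constants.
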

Sufficient conditions to ensure that the minimal singular value of $V_i$, $i \in [1:d]$, is uniformly bounded from below even for varying~$\varepsilon$ are given in \cite[Theorem 3.6]{SchmBold25} such that $\tilde{C}(\mathcal{D})$ can be uniformly bounded, whereas $\hat{C}(\mathcal{X})$ has to be compensated by suitably adapting the cluster radius~$r_\mathcal{X}$.

This approach, which guarantees the existence of proportional error bounds related to $h_\mathcal{X}$ and $r_\mathcal{X}$, 
is denoted in the following by \textit{physics-informed kEDMD} (\textbf{PI-kEDMD}). Instead, we use \textbf{kEDMD} to denote models obtained with the standard identification algorithm described in~\cite{BoldPhil25}. 
Finally, we point out that Lipschitz continuity of kEDMD models is proven in~\cite[Section 4]{schimperna2025data} such that we can constructively verify the required Assumption~\ref{ass:error_bounds}.

\section{Numerical example} \label{sec:examples}

In this section, the proposed MPC framework is illustrated for the Euler discretization of the controlled van der Pol-oscillator, given by the control affine system
\begin{align}\label{eq:sys:vdp}
    x^+ = x + \Delta t \begin{pmatrix}
        x_2 \\ \nu (1-x_1^2)x_2 - x_1 + u
    \end{pmatrix}
\end{align}
for a time step~$\Delta t = 0.05$ and parameter~$\nu = 0.1$. We consider the domain where we sample to be $\Omega = [-2, 2]^2$ and the input constraint set~$\mathbb{U} = [-2, 2]$. 
As kernel functions, we use the Wendland function given by
    \begin{align*}
        \phi_{n, 1}(r) &= \begin{cases}
              \frac{1}{20}(1-r)^4(4r + 1) & \text{for }\ 0 \leq r \leq 1 \\
              0 & \text{else } 
            \end{cases}. 
    \end{align*}
We consider virtual observation points arranged in a grid of Padua points, as described in~\cite{caliari2005bivariate}, to which we add a point $x_1 = 0$. In particular, we consider $d \in \{352, 1327\}$ virtual observation points.
For each virtual observation point, we sample $d_i = 25$ data points~$(x_{ij}, u_{ij}, x^+_{ij}) \in \mathcal{B}_{r_\mathcal{X}}(x_i)$, $i \in [1:d]$, considering a cluster radius~$r_\mathcal{X} = \frac{\sqrt{2}}{d}$. 
For PI-kEDMD, the regression problem~\eqref{eq:regression:step1} is modified for $x_1$ in order to obtain proportional error bounds.
For the MPC implementation we considered a domain $\mathbb{S} = [-1.9, 1.9]^2$,
on which we estimated uniform error bounds of $\eta^\varepsilon = 0.05$ for $d = 352$ and of $\eta^\varepsilon = 0.005$ for $d = 1327$. The use of a reduced domain compared to $\Omega$ is motivated by the larger values of the error in $\Omega \setminus \mathbb{S}$.
These values have been used for the constraint tightening from Definition~\ref{def:admissibility}, together with the estimated Lipschitz constants of~$\bar L = 2.27$ for $d = 352$ and $\bar L = 1.75$ for $d = 1327$. 
For these values, the maximum prediction horizons for which the tightened sets are nonempty are $N = 4$ for $d = 352$ and $N = 9$ for $d = 1327$. Therefore, horizons of $N = 4$, and  matrices $Q = I \in \R^{2 \times 2}$ and $R = 10^{-4} \in \R$ are chosen for the simulations. From the initial state~$\hat{x} = (0.5, 0.5)^\top$ we then want to steer the system to the origin. 
The simulations are implemented in Python, using Casadi to implement the solution of the OCP.

Figure \ref{fig:vdp} compares the simulation results for different numbers of virtual observation points and the two kEDMD methods. It is visible that in absence of proportional error bounds the error stagnates at a positive constant value, i.e. only practical asymptotic stability is obtained. Instead, the MPC based on PI-kEDMD models achieves a much higher accuracy.

\begin{figure}
    \centering
    \includegraphics[width=0.5\textwidth]{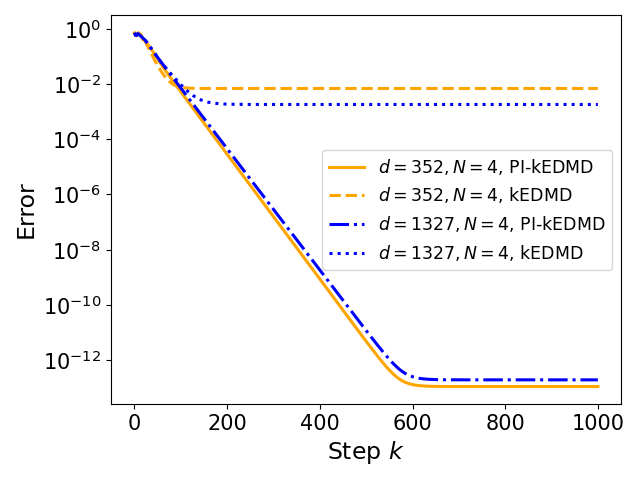}
    \caption{Van der Pol \eqref{eq:sys:vdp}: Error $\|x(k)\|$ of the kEDMD-based MPC with different numbers of clusters.}
    \label{fig:vdp}
\end{figure}

\section{Conclusion} \label{sec:conclusion}
In this paper, we have proved that MPC using a surrogate model in the 
optimization step can asymptotically stabilize the system if proportional and uniform error bounds holds. For a large class of nonlinear systems, these conditions can be satisfied using kEDMD surrogate models.

The main drawback of the proposed approach is the conservativeness of the constraint tightening based on the Lipschitz constant, in which the tightening grows unbounded with increasing prediction horizons. This issue could be mitigated using more sophisticated constraint tightening techniques, see, e.g., \cite{kohler2018novel}, which will be considered for future works.

\bibliographystyle{ieeetr}
\bibliography{references_ECC}

@article{pannocchia2015offset,
  title={Offset-free {MPC} explained: novelties, subtleties, and applications},
  author={Pannocchia, Gabriele and Gabiccini, Marco and Artoni, Alessio},
  journal={IFAC-PapersOnLine},
  volume={48},
  number={23},
  pages={342--351},
  year={2015},
  publisher={Elsevier}
}

@article{BoldPhil25,
  title={Kernel-based {K}oopman approximants for control: {F}lexible sampling, error analysis, and stability},
  author={Bold, Lea and Philipp, Friedrich M and Schaller, Manuel and Worthmann, Karl},
  journal={SIAM Journal on Control and Optimization},
  volume={63},
  number={6},
  pages={4044--4071},
  year={2025}
}

@inproceedings{WortStra24,
  title={Data-driven {MPC} with terminal conditions in the {K}oopman framework},
  author={Worthmann, Karl and Str{\"a}sser, Robin and Schaller, Manuel and Berberich, Julian and Allg{\"o}wer, Frank},
  booktitle={63rd IEEE Conference on Decision and Control (CDC)},
  pages={146--151},
  year={2024}
}

@book{Wend04,
  title={Scattered data approximation},
  author={Wendland, Holger},
  volume={17},
  year={2004},
  publisher={Cambridge University Press}
}

@article{kohne2025error,
  title={${L}^\infty$-error Bounds for Approximations of the {K}oopman Operator by Kernel Extended Dynamic Mode Decomposition},
  author={K{\"o}hne, Frederik and Philipp, Friedrich M and Schaller, Manuel and Schiela, Anton and Worthmann, Karl},
  journal={SIAM {J}ournal on {A}pplied {D}ynamical {S}ystems},
  volume={24},
  number={1},
  pages={501--529},
  year={2025},
  publisher={SIAM}
}

@article{FassYe11,
    author = {Fasshauer, G.E. and Ye, Q.},
    title = {Reproducing kernels of generalized {S}obolev spaces via a {G}reen function approach with distributional operators},
    journal = {Numerische Mathematik},
    volume = {119},
    pages = {585--611},
    year = {2011}
}

@article{schimperna2025data,
  title={Data-driven model predictive control: Asymptotic stability despite approximation errors exemplified in the {K}oopman framework},
  author={Schimperna, Irene and Worthmann, Karl and Schaller, Manuel and Bold, Lea and Magni, Lalo},
  journal={arXiv preprint arXiv:2505.05951},
  year={2025}
}

@inproceedings{limon2002input,
  title={Input-to-state stable {MPC} for constrained discrete-time nonlinear systems with bounded additive uncertainties},
  author={Lim{\'o}n Marruedo, Daniel and Alamo, Teodoro and Camacho, Eduardo F},
  booktitle={41st IEEE Conference on Decision and Control (CDC)},
  volume={4},
  pages={4619--4624},
  year={2002}
}

@inproceedings{kohler2018novel,
  title={A novel constraint tightening approach for nonlinear robust model predictive control},
  author={K{\"o}hler, Johannes and M{\"u}ller, Matthias A and Allg{\"o}wer, Frank},
  booktitle={IEEE Annual American Control Conference (ACC)},
  pages={728--734}
}

@article{wendland2005approximate,
  title={Approximate interpolation with applications to selecting smoothing parameters},
  author={Wendland, Holger and Rieger, Christian},
  journal={Numerische Mathematik},
  volume={101},
  pages={729--748},
  year={2005},
  publisher={Springer}
}

@article{williams2015data,
  title={A data--driven approximation of the {K}oopman operator: Extending dynamic mode decomposition},
  author={Williams, Matthew O. and Kevrekidis, Ioannis G. and Rowley, Clarence W.},
  journal={Journal of Nonlinear Science},
  volume={25},
  pages={1307--1346},
  year={2015},
  publisher={Springer}
}

@article{SchiBold25,
    author = {Schimperna, Irene and Bold, Lea and Worthmann, Karl},
    title = {Offset-free Nonlinear {MPC} with {K}oopman-based Surrogate Models},
    journal = {IFAC-PapersOnLine},
    volume={59},
  number={19},
  pages={466--471},
  year={2025},
  publisher={Elsevier}
}

@article{bold2024data,
  title={Data-driven {MPC} with stability guarantees using extended dynamic mode decomposition},
  author={Bold, Lea and Gr{\"u}ne, Lars and Schaller, Manuel and Worthmann, Karl},
  journal={IEEE Transactions on Automatic Control},
  year={2025},
    volume={70}, 
    number={1},
    pages={534-541}
}

@book{grune2017nonlinear,
  title={Nonlinear model predictive control},
  author={Gr{\"u}ne, Lars and Pannek, J{\"u}rgen},
  year={2017},
  publisher={Springer}
}

@article{grimm2007nominally,
  title={Nominally robust model predictive control with state constraints},
  author={Grimm, Gene and Messina, Michael J and Tuna, Sezai E and Teel, Andrew R},
  journal={IEEE Transactions on Automatic Control},
  volume={52},
  number={10},
  pages={1856--1870},
  year={2007},
  publisher={IEEE}
}

@article{klus2020kernel,
  title={Kernel-based approximation of the {K}oopman generator and {S}chr{\"o}dinger operator},
  author={Klus, Stefan and N{\"u}ske, Feliks and Hamzi, Boumediene},
  journal={Entropy},
  volume={22, 722},
  number={7},
  year={2020},
  publisher={MDPI}
}

@article{iacob:toth:schoukens:2022,
  title={Koopman form of nonlinear systems with inputs},
  author={Iacob, Lucian Cristian and T{\'o}th, Roland and Schoukens, Maarten},
  journal={Automatica},
  volume={162, 111525},
  year={2024},
  publisher={Elsevier}
}

@article{korda2018linear,
  title={Linear predictors for nonlinear dynamical systems: {K}oopman operator meets model predictive control},
  author={Korda, Milan and Mezi{\'c}, Igor},
  journal={Automatica},
  volume={93},
  pages={149--160},
  year={2018},
  publisher={Elsevier}
}

@article{schaller2023towards,
  title={Towards reliable data-based optimal and predictive control using extended {DMD}},
  author={Schaller, Manuel and Worthmann, Karl and Philipp, Friedrich and Peitz, Sebastian and N{\"u}ske, Feliks},
  journal={IFAC-PapersOnLine},
  volume={56},
  number={1},
  pages={169--174},
  year={2023},
  publisher={Elsevier}
}

@article{nuske2023finite,
  title={Finite-data error bounds for {K}oopman-based prediction and control},
  author={N{\"u}ske, Feliks and Peitz, Sebastian and Philipp, Friedrich and Schaller, Manuel and Worthmann, Karl},
  journal={Journal of Nonlinear Science},
  volume={33, 14},
  year={2023},
  publisher={Springer}
}

@book{RawlMayn17,
  title={Model {P}redictive {C}ontrol: {T}heory, {C}omputation, and {D}esign},
  author={Rawlings, James Blake and Mayne, David Q and Diehl, Moritz and others},
  year={2017},
  publisher={Nob Hill Publishing}
}

@article{rowley2009spectral,
  title={Spectral analysis of nonlinear flows},
  author={Rowley, Clarence W and Mezi{\'c}, Igor and Bagheri, Shervin and Schlatter, Philipp and Henningson, Dan S},
  journal={Journal of Fluid Mechanics},
  volume={641},
  pages={115--127},
  year={2009},
  publisher={Cambridge University Press}
}

@article{mezic2005spectral,
  title={Spectral properties of dynamical systems, model reduction and decompositions},
  author={Mezi{\'c}, Igor},
  journal={Nonlinear Dynamics},
  volume={41},
  pages={309--325},
  year={2005},
  publisher={Springer}
}

@article{BoldScha25,
    author = {Bold, Lea and Schaller, Manuel and Schimperna, Irene and Worthmann, Karl},
    title = {Kernel {EDMD} for data-driven nonlinear {K}oopman {MPC} with stability guarantees},
    journal = {IFAC PapersOnLine},
    volume={59},
  number={19},
  pages={478--483},
  year={2025},
  publisher={Elsevier}
}

@article{denicolao1997stabilizing,
  title={Stabilizing predictive control of nonlinear {ARX} models},
  author={De Nicolao, Giuseppe and Magni, Lalo and Scattolini, Riccardo},
  journal={Automatica},
  volume={33},
  number={9},
  pages={1691--1697},
  year={1997},
  publisher={Elsevier}
}

@article{ren2022tutorial,
  title={A tutorial review of neural network modeling approaches for model predictive control},
  author={Ren, Yi Ming and Alhajeri, Mohammed S and Luo, Junwei and Chen, Scarlett and Abdullah, Fahim and Wu, Zhe and Christofides, Panagiotis D},
  journal={Computers \& Chemical Engineering},
  volume={165, 107956},
  eid={107956},
  year={2022},
  publisher={Elsevier}
}

@article{hewing2019cautious,
  title={Cautious model predictive control using gaussian process regression},
  author={Hewing, Lukas and Kabzan, Juraj and Zeilinger, Melanie N},
  journal={IEEE Transactions on Control Systems Technology},
  volume={28},
  number={6},
  pages={2736--2743},
  year={2019}
}

@inproceedings{koller2018learning,
  title={Learning-based model predictive control for safe exploration},
  author={Koller, Torsten and Berkenkamp, Felix and Turchetta, Matteo and Krause, Andreas},
  booktitle={57th IEEE Conference on Decision and Control (CDC)},
  pages={6059--6066},
  year={2018}
}

@article{caliari2005bivariate,
  title={Bivariate polynomial interpolation on the square at new nodal sets},
  author={Caliari, Marco and De Marchi, Stefano and Vianello, Marco},
  journal={Applied Mathematics and Computation},
  volume={165},
  number={2},
  pages={261--274},
  year={2005},
  publisher={Elsevier}
}

@article{schimperna2024recurrent,
  title={Recurrent neural network-based {MPC} for systems with input and incremental input constraints},
  author={Schimperna, Irene and Galuppini, Giacomo and Magni, Lalo},
  journal={IEEE Control Systems Letters},
  volume={8},
  pages={814--819},
  year={2024},
  publisher={IEEE}
}

@article{kuntz2025beyond,
  title={Beyond inherent robustness: strong stability of {MPC} despite plant-model mismatch},
  author={Kuntz, Steven J and Rawlings, James B},
  journal={IEEE Transactions on Automatic Control},
   year={2026},
  volume={71},
  number={2},
  pages={780-792},
}

@article{hewing2020learning,
  title={Learning-based model predictive control: Toward safe learning in control},
  author={Hewing, Lukas and Wabersich, Kim P and Menner, Marcel and Zeilinger, Melanie N},
  journal={Annual Review of Control, Robotics, and Autonomous Systems},
  volume={3},
  number={1},
  pages={269--296},
  year={2020},
  publisher={Annual Reviews}
}

@Book{kouvaritakis2016model,
  title={Model predictive control},
  author={Kouvaritakis, Basil and Cannon, Mark},
  journal={Switzerland: Springer International Publishing},
  year={2016},
  publisher={Springer}
}

@article{SchmBold25,
    title={On excitation of control-affine systems and its use for data-driven {K}oopman approximants},
    author={Schmitz, Philipp and Bold, Lea and Philipp, Friedrich M and Rosenfelder, Mario and Eberhard, Peter and Ebel, Henrik and Worthmann, Karl},
    journal={arXiv preprint arXiv:2511.03734},
    year={2025}
}

@article{vaidya2025koopman,
  title={When {K}oopman Meets {H}amilton and {J}acobi},
  author={Vaidya, Umesh},
  journal={IEEE Transactions on Automatic Control},
  year={2025},
  publisher={IEEE}, 
  pages={1-16},
}

@inproceedings{deka2024extensions,
  title={Extensions of the Path-integral formula for computation of {K}oopman eigenfunctions},
  author={Deka, Shankar A and Vaidya, Umesh},
  booktitle={63rd IEEE Conference on Decision and Control (CDC)},
  pages={2875--2881},
  year={2024}
}

@article{strasser2025overview,
  title={An overview of {K}oopman-based control: {F}rom error bounds to closed-loop guarantees},
  author={Str{\"a}sser, Robin and Worthmann, Karl and Mezi{\'c}, Igor and Berberich, Julian and Schaller, Manuel and Allg{\"o}wer, Frank},
  journal={Annual Reviews in Control},
    volume={61, 101035},
  year={2026}
}

@article{shang2025exponential,
  title={On the Exponential Stability of {K}oopman Model Predictive Control},
  author={Shang, Xu and Cort{\'e}s, Jorge and Zheng, Yang},
  journal={arXiv preprint arXiv:2511.02008},
  year={2025}
}

@article{lazar2025product,
  title={From product {H}ilbert spaces to the generalized {K}oopman operator and the nonlinear fundamental lemma},
  author={Lazar, Mircea},
  journal={arXiv preprint arXiv:2508.07494},
  year={2025}
}

\end{document}